\definecolor{codegreen}{rgb}{0,0.6,0}
\definecolor{codegray}{rgb}{0.5,0.5,0.5}
\definecolor{codepurple}{rgb}{0.58,0,0.82}
\definecolor{backcolour}{rgb}{0.95,0.95,0.92}
\lstdefinestyle{mystyle}{
    backgroundcolor=\color{backcolour},   
    commentstyle=\color{codegreen},
    keywordstyle=\color{magenta},
    numberstyle=\tiny\color{codegray},
    stringstyle=\color{codepurple},
    basicstyle=\footnotesize,
    breakatwhitespace=false,         
    breaklines=true,                 
    captionpos=b,                    
    keepspaces=true,                 
    numbers=left,                    
    numbersep=5pt,                  
    showspaces=false,                
    showstringspaces=false,
    showtabs=false,                  
    tabsize=2
}
\newcommandx{\question}[2][1=]{\todo[inline,backgroundcolor=red!25,bordercolor=red,#1]{#2}}
\begin{document}
 
\title{\LARGE \bf{Reducing Aggregate Electric Vehicle Battery Capacity through Sharing}}
\author{Polina Alexeenko$^{\dagger}$ and Vasileios Charisopoulos$^{*}$%
\thanks{ $^{\dagger}$Electrical \& Computer Engineering, Cornell University, Ithaca NY, \url{pa357@cornell.edu}}
\thanks{$^{*}$ Operations Research \& Information Engineering, Cornell University, Ithaca NY, \url{vc333@cornell.edu}}
}

\maketitle

\begin{abstract}
   Meeting growing demand for automotive battery resources is predicted to be costly from both economic and environmental perspectives. 
   To minimize these costs, battery resources should be deployed as efficiently as possible.
   A potential source of inefficiency in battery deployment is the fact that the batteries of personal vehicles are typically much larger than needed to meet most daily mobility needs.
   In this paper, we consider whether battery resources can be used more efficiently in a setting where drivers, in addition to having personal vehicle batteries, have access to a shared battery resource. 
   More precisely, we consider the problem of minimizing aggregate battery capacity in settings with and without a shared resource subject to the requirement that driver commuting needs are met with high reliability.
   To assess the potential for reductions in deployed battery capacity with the addition of a shared resource, we quantify the difference in deployed battery capacity with and without a shared resource in case study using real-world longitudinal mobility data from Puget Sound, Washington.
   We find that giving drivers access to a shared battery resource can substantially reduces deployed battery capacity. 
   Furthermore, relative reductions in battery capacity increase with number of drivers and the level of reliability desired.
\end{abstract}

\newtheorem{theorem}{Theorem}[section]
\newtheorem{proposition}{Proposition}[section]
\newtheorem{lemma}{Lemma}[theorem]
\newtheorem{definition}{Definition}[section]
\newtheorem{example}{Example}[section]
\newtheorem{remark}{Remark}[section]

\section{Introduction}

Meeting the growing demand for electric vehicle (EV) batteries is
predicted to be costly from both economic and environmental
perspectives~\cite{notter2010contribution,olivetti2017lithium}. 
In 2015, global automotive battery production was less
than $40$ GWh/year. By 2020, battery
production had increased four-fold to over 150 GWh/year~\cite{outlook2021electric}.
Increases in battery production are driven by multiple factors.
First, the number of electric vehicles (EVs) on the road is growing due to a combination of government incentives, increasing environmental concerns, and decreasing vehicle costs. 
Secondly, vehicle battery sizes are themselves becoming larger: the average range of a new BEV increased by 43\% between 2015 and 2020, from 124 miles to 218 miles~\cite{outlook2021electric}.

Demand for large battery capacities stems, in part, from \textit{range anxiety}, the fear that an EV's battery capacity will be insufficient to satisfy a typical driver's mobility needs. 
That is, because of the sparsity of EV charging stations and the long charging times associated with most stations, drivers tend to chose their battery sizes to minimize the likelihood of exhausting their battery mid-trip. 
As a result, typical commuting distances of EV owners tend to be small relative to battery size~\cite{zhao2021assessment,AB23}. 
For example, a study of the driving behavior of several hundred drivers in the United States over the course of a year found that 75\% of drivers traveled \emph{fewer than 100 miles daily} during 96\% of the year~\cite{pearre2011electric}.

Instead of sizing batteries to meet drivers' infrequent long-distance travel needs, EV drivers can use \textit{range extenders}.
Range extenders (REs) are auxiliary devices which provide additional energy to the EV to supplement its battery and increase range \cite{tran2021review}.
REs can be used to reduce vehicle battery sizes while ensuring that both long and short distance commuting needs are met.  
It is important to note that although REs can offset personal vehicle battery sizes, their impact on the system as a whole (e.g., in terms of environmental or economic benefits) can be difficult to measure because of the diversity of RE technologies available.
For example, range extender powertrains vary widely, including internal combustion engines, hydrogen fuel cells, and gas turbines \cite{guanetti2016energy, tran2021review,wu2019intelligent}. 
Of particular relevance to this paper are range extenders consisting of a trailer-mounted battery which can be plugged into the EV through its charging port \cite{segard2016ep}.
Because these REs use the same powertrain as the vehicles themselves, their capacity can be compared directly with personal vehicle battery capacity.

When personal vehicle batteries are sized to meet typical commuting needs and when driver commuting distances are not strongly correlated, the probability that many drivers will simultaneously require range extension is low.
This suggests that a modestly sized RE resource could be shared across drivers without compromising reliability.
Furthermore, providing drivers with a shared battery resource (in addition to personal batteries) could reduce the total (i.e., personal and shared RE) amount of battery capacity deployed relative to a setting where drivers only have personal batteries.
The focus of this paper is to quantify the potential reduction in total battery capacity achievable through the introduction of a shared RE resource. 
In particular, we consider the problem of determining the amount of battery capacity required to meet driver commuting needs with high probability in settings with and without a shared resource and show that the presence of a shared resource can reduce total battery capacity without reducing reliability.

The concept of using a shared resource to reduce system-wide risk is related to the concept of \textit{diversification} in financial risk management. 
Diversification refers to the phenomenon where financial assets with varying risk profiles are combined into a portfolio whose aggregate level of risk is lower than the sum of its component assets \cite{mcneil2015quantitative, mathur2015risk}.
In the context of power systems, risk aggregation (with the objective of improving system robustness) has received considerable attention, e.g., in the context of wind generation  \cite{qiu2013risk, baeyens2013coalitional} , photovoltaics \cite{alafita2014securitization, lowder2013potential}, and mini-grids \cite{united2015increasing, malhotra2017scaling}.
Closest in spirit to our work is that of Abdolmaleki et al., which considers increasing vehicle range using a network of vehicles able to share power through wireless transfer \cite{abdolmaleki2019vehicle}. 
Although the authors mention the potential of their proposed methodology to reduce battery capacity, the discussion is brief and focuses on quantifying reliability under a specific alternative battery capacity size. 
This differs from the more general framework presented in our paper, which aims to characterize the capacity-reliability tradeoff for a wider range of capacity sizes.

\subsection{Contributions}

In this paper, we explore how the presence of a shared battery resource can reduce the total amount of battery capacity deployed across a system of EV drivers while guaranteeing that every driver's mobility needs are met with high reliability. 
We formulate the battery capacity planning problem as a chance-constrained optimization problem and derive a conservative approximation of this problem.
Our approximation offers two key advantages over the original problem.
First, the constraint function is convex in the decision variables and thus amenable to solution using scenario approximation.
Second, while the original problem involves a number of constraints equal to the number of drivers in the system, our reformulation has only a single constraint.
To demonstrate the practical utility of the proposed framework, we assess the potential for capacity reduction through sharing using real-world mobility data from Puget Sound, Washington. 
Our empirical results suggest that access to a shared resource can significantly reduce
the required battery capacity, and that the potential for reduction increases with desired reliability levels and the number of drivers in the system.

\subsection{Notation and organization}
We briefly describe notation used throughout the paper. Vectors are denoted
using lowercase letters in boldface, such as $\bm{x}$.
We write $(x)_+ := \max\set{x, 0}$ for the positive part
of a scalar. We write $\Pr(\cE)$  for the probability of an event $\cE$, and $\expec{X}$ for the expectation of random variable $X$. 
Finally, given a positive integer $n$, we
write $[N]$ for the set $\set{1, 2, \dots, N}$.

The remainder of the paper is organized as follows. Section~\ref{sec:formulation} introduces the system model and presents the optimization problems in the shared and non-shared settings. 
Section~\ref{sec:results} presents the conservative reformulation of the battery capacity planning problem and discusses approximate solutions to the reformulated problem.
Section~\ref{sec:study} presents a results from an empirical study on real-world commuting data, and Section~\ref{sec:conclusion} concludes and discusses
potential future directions.

\section{Formulation} \label{sec:formulation}

\subsection{System model}

We consider a setting in which a central decision maker selects a set of electric vehicle battery capacities to serve the mobility needs of a group of EV drivers. 
The decision maker's objective is to minimize the total amount of EV battery capacity deployed (in order to minimize the economic or environmental costs of battery capacity production) while ensuring that each driver's daily energy requirements are satisfied with high probability. 
We index the set of driver's according to $i \in [\numCust] := \left\{1, \dots, \numCust \right\}$ and denote their personal battery capacity by $\batSizPriv_i$.
Each driver's daily energy requirements are distributed according to a probability distribution $\Pbb$, and they require that the amount of battery capacity available to them exceed their daily energy requirement with probability at least $\probSucc \in \paren{0,1}$.

In the setting without a shared resource, each driver's personal battery capacity must exceed their daily energy requirement with high probability, and the decision maker selects capacities according to the solution of the following optimization problem:
\begin{equation}
    \tag{P-NS}
    \begin{aligned}
    \underset{\bm{\batSizPriv}}{\text{minimize}} \quad & \sum_{i=1}^{\numCust} \batSizPriv_i \\
    \text{subject to} \quad & \Pr \left(  \tripDist_i < \batSizPriv_i \right)   \geq \probSucc_i \quad \forall i \in [\numCust].
    \end{aligned}
    \label{eq:opt-non-shared}
\end{equation}
The optimal solution to Problem \eqref{eq:opt-non-shared} sets each driver's battery capacity according to the quantile of their daily energy distribution, i.e.,
\begin{equation}
      \batSizPriv_i = F^{-1}_i (\alpha_i) := \inf \{ x \in \Rbb : F_i\paren{x} \geq \alpha_i \}.
\end{equation}
where $F_i$ denotes the cumulative distribution function associated with driver $i$'s energy requirement distribution $\Pbb_i$. 
In aggregate, the total amount of battery capacity required to meet everyone's driving needs is given by 
\begin{align}
      \mathsf{opt}_{\mathrm{ns}}^* = \sum_{i=1}^\numCust F^{-1}_i (\alpha_{i}).
      \label{eq:ns-quants}
\end{align}

When the target reliability $\probSucc_i$ is high, driver batteries tend to be under-utilized in the sense that battery capacities are much larger than required to meet mobility needs on most days.
Furthermore, when energy requirements are not strongly correlated across drivers, the probability that a large fraction of drivers have high energy requirements on the same day is small. 
Together, these observations suggest that under certain circumstances, diverting resources from personal batteries to a shared battery may allow for significant reductions in the total battery capacity needed to satisfy driver reliability constraints.

\subsection{Battery sharing as a chance constrained problem}

As an alternative to the setting where drivers must rely exclusively on their personal batteries, we consider a setting in which drivers have access to a shared battery resource of capacity $\batSizPub$. 
We assume that this resource can be divided into units of arbitrary size and distributed to the individual drivers at no cost. 
In this setting, the decision maker must chose both the personal battery sizes $\batSizPriv_i \text{ for } i \in \mathcal{N}$ and the size of the shared resource $\batSizPub$. 
In the presence of a shared resource, the probability with which a driver's needs are satisfied is a function of both their personal battery size and the portion of shared resource allocated to them.
In this setting, the decision maker chooses shared and personal battery sizes as the solution to the following:
\begin{equation}
    \tag{P-S}
        \begin{aligned}
        \underset{\batSizPub, \bm{\batSizPriv}}{\text{minimize}} \quad
        &\batSizPub + \sum_{i=1}^{\numCust} \batSizPriv_i \\
        \text{subject to} \quad & \Pr \left(\batSizPriv_i + f_i(\bm{\batSizPriv}, \bm{\tripDist}, \batSizPub) \geq \tripDist_i \right) \geq \alpha_i \quad \forall i \in [\numCust]
    \end{aligned}
    \label{eq:opt-shared}
\end{equation}
where $f_i: \Rbb^{N} \times \Rbb^N \times \Rbb \mapsto \Rbb$ is an \emph{allocation rule} determining the quantity of shared resource allocated to driver $i$ as a function of the personal battery capacities, shared resource capacity, and realized energy requirements of all drivers.
At a slight abuse of terminology, we will refer to both $f_{i}$ and the collection
$\bm{f} \equiv (f_i)_{i=1}^N$ as allocation rules in the sequel.

Problem \eqref{eq:opt-shared} belongs to a class of problems known as \textit{chance constrained} programs. 
Chance constrained problems can be non-convex even in simple settings (e.g., when the constraint functions $g_i$ are affine in the decision variables $\batSizPub, \batSizPriv$ and uncertain parameters $\tripDist$).
In settings where little is known about the underlying distribution or the distribution is intractable to analyse directly, \textit{scenario-based} approximations to chance constrained problems are particularly useful \cite{campi2008exact,calafiore2010random}. 
Scenario approximations involve replacing chance constraints with a set of sampled constraints.
In order to produce a solution feasible for the original problem with high confidence, scenario approximations require only that the constraint function is convex in the decision variable.
In our setting, the convexity of the constraint function depends on the rule or policy according to which the shared resource is allocated. 
In the sequel, we discuss common allocation rules of practical interest and their implications on the convexity of the constraint functions.

\subsection{Shared resource allocation model} 
\label{sec:shared-resource-allocation-model}
The amount of resource available to a driver is the sum of their personal battery and the quantity of shared resource allocated to them. 
We assume that the allocation rule is structured to ensure that the sum over each driver's shared resource allocation does not exceed the total quantity of shared resource, i.e,
$
\sum_{i=1}^N f_{i}(\bm{\batSizPriv}, \bm{\tripDist}, \batSizPub) \leq \batSizPub.
$
Furthermore, we assume that allocations must be non-negative, i.e., $f_{i}(\cdot) \geq 0$.
Beyond these mild assumptions, the structure of the allocation rule is highly flexible.

However, even simple allocation rules (e.g., proportional allocation) can result in intractable reliability constraints, rendering Problem~\eqref{eq:opt-shared} unamenable to scenario-based approximations.\footnote{A discussion of the intractability of the reliability constraints resulting from proportional allocation rules appears
in Appendix~\ref{sec:app-nonconvex}.}
In order to overcome the challenges presented by many allocation rules, we propose an approximation to Problem
\eqref{eq:opt-shared} and show that our proposal furnishes an \emph{inner approximation}
to the constraint set of~\eqref{eq:opt-shared} for a large family of allocation functions.

\section{Results} \label{sec:results}

\subsection{Tractable approximation} 

Instead of attempting to solve Problem~\eqref{eq:opt-shared} directly, we approximate it by a simpler, though often more conservative problem.
Specifically, we replace the $\numCust$ individual constraints on energy requirement satisfaction with a single constraint requiring that the \textit{aggregate shortfall} (the sum over each driver's shortfall) be less than the shared battery capacity with high probability.
That is, we approximate Problem~\eqref{eq:opt-shared} by
\begin{equation}
    \tag{P-SI}
    \begin{aligned}
    \underset{\bm{\batSizPriv}, \batSizPub}{\text{minimize}} \quad & \batSizPub + \sum_{i = 1}^N \batSizPriv_i \\
    \text{subject to}               \quad & \Pr\left(
        \sum_{i = 1}^N (\tripDist_i - \batSizPriv_i)_{+} \leq \batSizPub
    \right) \geq \alpha
    \end{aligned}
    \label{eq:inner}
\end{equation}
where $\alpha = \max_i (\probSucc_1, \dots, \probSucc_{\numCust})$. 

Problem~\eqref{eq:inner} offers two key structural advantages over the original capacity planning problem~\eqref{eq:opt-shared}. 
First, while the original problem involves $N$ constraints, Problem~\eqref{eq:inner} uses a single constraint. 
Second, the function inside the chance constraint is convex, making Problem~\eqref{eq:inner} more amenable to both direct analysis and approximation.

Furthermore, Problem~\eqref{eq:inner} is an inner approximation to the original capacity planning problem for many common allocation rules.
That is, feasible solutions to Problem~\eqref{eq:inner} are feasible for the original capacity planning Problem~\eqref{eq:opt-shared} for a large family of allocation rules which we refer to as \textit{shortfall-minimizing} allocation rules.

\begin{definition}[Shortfall-minimizing allocation rules]
    Consider an allocation rule $f = (f_i)_{i=1}^N$, where
    $f_i: \Rbb^N \times \Rbb^N \times \Rbb \to \mathbb{R}_+$. We call $f$
    \emph{shortfall-minimizing} if it satisfies the following property: 
    \begin{equation}
      \batSizPub \geq \sum_{i=1}^N (\tripDist_i - \batSizPriv_i)_+ \implies
    f_{i}(\bm{\batSizPriv}, \bm{\tripDist}, \batSizPub) \geq \tripDist_i -   \batSizPriv_{i}, \;\; \text{for all $i \in [N]$},
      \label{eq:allocation-rule-property}
    \end{equation}
\end{definition}
The above definition simply states that whenever the amount of shared resource exceeds the aggregate shortfall across drivers (i.e., when there is enough shared resource to satisfy all driver shortfalls), each driver receives a portion of the shared resource greater than or equal to their energy shortfall.
The family of shortfall-minimizing rules is large, including the proportional, first-come-first-serve (FCFS), and utilitarian rules described in detail in Section~\ref{sec:study}.%
\footnote{For an example of a rule which is \textit{not} shortfall-minimizing, one can consider an `equal allocation' rule which disburses the same amount of the shared resource to all drivers regardless of their shortfall.}

\begin{theorem} For any shortfall-minimizing allocation rule, any feasible solution to Problem~\eqref{eq:inner} is feasible for Problem~\eqref{eq:opt-shared}.
\label{theorem:inner-approx}
\end{theorem}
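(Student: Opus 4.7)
The plan is to show that feasibility for~\eqref{eq:inner} forces, event-by-event, the constraints of~\eqref{eq:opt-shared} to hold, so that the latter's probabilistic requirements are automatic from the former's single aggregate one.

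Concretely, I would fix an arbitrary shortfall-minimizing rule $\bm{f}$ and a feasible solution $(\bm{\batSizPriv}, \batSizPub)$ to~\eqref{eq:inner}, and define the ``aggregate-surplus'' event
\[
  \cE \;:=\; \Bigl\{\, \batSizPub \;\geq\; \sum_{i=1}^N (\tripDist_i - \batSizPriv_i)_+ \,\Bigr\}.
\]
Feasibility for~\eqref{eq:inner} yields $\Pr(\cE) \geq \alpha = \max_i \alpha_i$ directly from the chance constraint. The shortfall-minimizing property~\eqref{eq:allocation-rule-property} is exactly the statement that on the event $\cE$, each driver $i$ receives $f_i(\bm{\batSizPriv}, \bm{\tripDist}, \batSizPub) \geq \tripDist_i - \batSizPriv_i$, or equivalently $\batSizPriv_i + f_i(\bm{\batSizPriv}, \bm{\tripDist}, \batSizPub) \geq \tripDist_i$. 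Thus $\cE$ is a subset of $\cE_i := \{\batSizPriv_i + f_i(\bm{\batSizPriv}, \bm{\tripDist}, \batSizPub) \geq \tripDist_i\}$ for every $i \in [N]$.

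The theorem then follows by monotonicity of probability: for each $i$,
\[
  \Pr(\cE_i) \;\geq\; \Pr(\cE) \;\geq\; \alpha \;\geq\; \alpha_i,
\]
which is precisely the $i$-th constraint of~\eqref{eq:opt-shared}. Since $i$ was arbitrary, $(\bm{\batSizPriv}, \batSizPub)$ is feasible for~\eqref{eq:opt-shared}.

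The argument is short, and no real obstacle lies in the inequalities themselves; the work is entirely conceptual. The key step is recognizing that replacing the $N$ individual constraints by the single aggregate-shortfall constraint is tight in the sense of set containment, \emph{provided} the allocation rule honors each driver's shortfall whenever the shared pool is large enough. This is exactly the role of the shortfall-minimizing hypothesis, and the proof would emphasize that without it (e.g., under an ``equal allocation'' rule) the containment $\cE \subseteq \cE_i$ can fail and the reduction breaks down.
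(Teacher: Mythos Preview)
Your argument is correct and essentially identical to the paper's: both define the aggregate-surplus event $\cE$, use the shortfall-minimizing property to conclude $\cE \subseteq \cE_i$ (the paper phrases this as $\Pr(\cE_i \mid \cE) = 1$), and then chain $\Pr(\cE_i) \geq \Pr(\cE) \geq \alpha \geq \alpha_i$. Your direct set-containment phrasing is marginally cleaner than the paper's conditional-probability detour, but the substance is the same.
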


\begin{proof}
We will write $\mathcal{E} := \set{\sum_{i} (\tripDist_i - \batSizPriv_i)_+ \leq \batSizPub}$ and
$\cE_{i} := \set{\batSizPriv_i + f_{i}(\bm{\batSizPriv}, \bm{\tripDist}, \batSizPub) \geq \tripDist_i}$ for
brevity. We have
\begin{align}
  \prob{\cE_i}
  &= \prob{\cE_i \cap \cE} + \prob{\cE_i \cap \cE^c} \notag \\
  & \geq
  \prob{\cE_i \cap \cE}
  \notag \\
  &=
  \prob{\batSizPriv_i + f_{i}(\bm{\batSizPriv}, \bm{\tripDist}, \batSizPub) \geq \tripDist_i \mid \cE}
  \cdot \prob{\cE} \label{eq:inner-approx-lb-simpl} \\
  &= \prob{\cE},
  \label{eq:inner-approx-lb-1}
\end{align}
using~\eqref{eq:allocation-rule-property} in~\eqref{eq:inner-approx-lb-simpl} to
deduce that the conditional probability $\prob{\cE_{i} \mid \cE} = 1$. Since $\probSucc \geq \probSucc_i$ and the
choice of index $i$ was arbitrary, the desired claim follows.
\end{proof}

Although the approximation can be conservative relative to the original capacity planning problem, it provides a useful tool for assessing the potential benefits of resource sharing.
That is, the aggregate battery capacity required to satisfy the constraints of Problem~\eqref{eq:inner} with a particular target reliability are often smaller than those required to achieve the same reliability level in the non-shared setting. 
Moreover, because Problem~\eqref{eq:inner} is an inner approximation of Problem~\eqref{eq:opt-shared}, analysis of Problem~\eqref{eq:inner} allows us to derive lower bounds on the reduction in deployed battery capacity achievable through resource sharing. 
For example, in Appendix~\ref{sec:app-gaus_ex}, we characterize the benefits of resource sharing in a setting where daily energy requirements are distributed according to independent Gaussians.

In settings where direct analysis of the chance constraint is not possible, a solution to Problem~\eqref{eq:inner} can be provably approximated through scenario-based methods.
Specifically, to produce an approximate solution to Problem~\eqref{eq:inner}, we replace the chance constraint with a set of $M_{\mathrm{sc}}$ \textit{sampled constraints} to produce an approximated problem:
\begin{equation}
    \begin{aligned}
        \underset{\bm{\batSizPriv}, \batSizPub}{\text{minimize}} \quad &
        \batSizPub + \sum_{i=1}^N \batSizPriv_i \\
        \text{subject to} \quad &
        \sum_{i=1}^N \max(\tripDist_i^{(j)} - \batSizPriv_i, 0) \leq \batSizPub,
        \;\; \text{$j \in [M_{\mathrm{sc}}]$.}
    \end{aligned}
    \label{eq:inner-scenario}
\end{equation}
where $\bm{\tripDist}^{(j)} \sim \Pbb$ for $j = 1, \dots, M_{\mathrm{sc}}$ are independent samples of $\bm{\tripDist}$ drawn from the underlying distribution.

A solution to Problem~\eqref{eq:inner-scenario} is guaranteed to be feasible for Problem~\eqref{eq:inner} with high confidence given a sufficiently large sample size. 
Specifically, to produce a solution which is feasible with confidence $1 - \delta$, it is sufficient to choose a sample size of at least $O\paren{\frac{N}{1-\alpha} \ln \paren{\frac{1}{\delta}}}$~\cite{calafiore2006scenario}.

\subsection{A heuristic for reducing conservatism}
In practice, the solutions obtained through scenario approximations can be very conservative.
Indeed, due to the high number of samples used in the approximation, Problem~\eqref{eq:inner-scenario} will often produce battery configurations attaining a reliability level significantly greater than the target $\alpha$. 
To reduce the conservatism of our approximations, we use a heuristic method for reducing the number of constraints involved in the solution of the problem.

 The conservatism reduction heuristic is implemented as follows. 
 We start by solving Problem~\eqref{eq:inner-scenario} using the sample size dictated by \cite{calafiore2006scenario}. 
We evaluate the empirical reliability of the obtained candidate solution using an additional set of samples of size $M_{\mathrm{eval}}$, where the sample size requirement is chosen as described in Appendix Section \ref{sec:app-rel}.  
 If the empirical reliability level is close to the target, the algorithm terminates. 
 If, however, the empirical reliability is larger than the target level, we reduce the number of samples used in the solution of Problem~\eqref{eq:inner-scenario} and re-solve the problem to produce a less conservative solution. 
 The resultant solution's empirical reliability is then evaluated, and the constraint number is either decreased or increased depending on whether the empirical reliability is greater or less than the target. 
 The conservatism reduction heuristic thus performs a binary search over the number of samples used in the scenario approximation to produce a solution with reliability level close to the target $\alpha$.
 The pseudocode for the conservatism reduction heuristic is given in Algorithm~\ref{alg:conservatism-reduction}.

\begin{algorithm}[h]
    \caption{Conservatism reduction heuristic}
    \begin{algorithmic}[1]
        \State \textbf{Inputs}:
        \begin{tabular}[t]{l l}
             $M_{\mathrm{sc}}$ & Number of scenario samples \\
             $M_{\mathrm{eval}}$ & Number of evaluation samples \\
             $T$ & Number of trials
        \end{tabular}
        
        \For{$t = 1, 2, \dots, T$}
            \State Draw $\bm{\tripDist}_{\mathrm{sc}}^{(j)} \sim
            \Pbb$ for
            $j = 1, \dots, M_{\mathrm{sc}}$.
            \State Draw $\bm{\tripDist}_{\mathrm{eval}}^{(j)} \sim
            \Pbb$ for
            $j = 1, \dots, M_{\mathrm{eval}}$.
            \State $\bm{\batSizPriv}_t, \batSizPub_t, \hat{\alpha}_t \gets
            \texttt{BinarySearch}(
                \{ \bm{\tripDist}_{\mathrm{sc}}^{(j)} \}_{j=1}^{M_{\mathrm{sc}}},
                \{ \bm{\tripDist}_{\mathrm{eval}}^{(j)} \}_{j=1}^{M_{\mathrm{eval}}},
                \alpha
            )$
        \EndFor
        \State \Return $(\bm{\batSizPriv}_t, \batSizPub_t)$,
        where $t = \arg\min_{s \in [T]} \hat{\alpha}_{s}$.
    \end{algorithmic}
    \label{alg:conservatism-reduction}
\end{algorithm}

\begin{algorithm}[h]
    \caption{Binary search to reduce conservatism}
    \begin{algorithmic}[1]
        \State \textbf{Inputs}:
        \begin{tabular}[t]{l l}
            $\bm{\tripDist}_{\mathrm{sc}}^{(j)}, \; j = 1, \dots, M_{\mathrm{sc}}$ &
            Scenario samples \\
            $\bm{\tripDist}_{\mathrm{eval}}^{(j)}, \; j = 1, \dots, M_{\mathrm{eval}}$ &
            Evaluation samples \\
            $\alpha$ & Target reliability
        \end{tabular}
        \State Set $m_{\mathsf{lo}} = 1$, $m_{\mathsf{hi}} = m$.
        \While{$\abs{m_{\mathsf{hi}} - m_{\mathsf{lo}}} > 1$}
            \State Set $m_{\mathsf{mid}} := \ceil{\frac{m_{\mathsf{lo}} + m_{\mathsf{hi}}}{2}}$.
            \State Let ($\bm{\batSizPriv}, \batSizPub$)
            solve \eqref{eq:inner-scenario} using
            $\set{
                \bm{\tripDist}^{(i)}_{\mathrm{sc}}
                \mid i \in [m_{\mathsf{mid}}]}$.
            \State Compute the empirical reliability
            \[
                \hat{\alpha} :=
                \frac{1}{k} \sum_{j = 1}^k
                \bm{1}\set{
                    \sum_{i=1}^N ([\bm{\tripDist}_{\mathrm{eval}}^{(j)}]_i - \batSizPriv_i)_+
                    \leq \batSizPub
                }
            \]
            \If{$\hat{\alpha} > \alpha$}
                \State $m_{\mathsf{hi}} \gets m_{\mathsf{mid}}$
            \Else
                \State $m_{\mathsf{lo}} \gets m_{\mathsf{mid}}$.
            \EndIf
        \EndWhile
        \State \Return $\bm{\batSizPriv}$, $\batSizPub$, $\hat{\alpha}$.
    \end{algorithmic}
    \label{alg:binary-search}
\end{algorithm}

\section{Empirical study} \label{sec:study}

\subsection{Data source and model}

To assess the potential of resource sharing to reduce deployed battery capacity in practice, we conduct an empirical study using real-world mobility data.
The data was collected as part of a study by the Puget Sound Research Council on the driving behavior of approximately 400 vehicles located in the Seattle metropolitan area between November 2004 and April 2006, and is publicly available through the National Renewable Energy Laboratory's Transportation Secure Data Center~\cite{tsdc2022}.
During the study, GPS data loggers were installed into each vehicle and collected information on the timing and distance of every trip taken by the vehicle's driver.
Figure~\ref{fig:daily-commute-distance-histograms} illustrates distributions over daily mileage for each driver and total daily mileage summed across all drivers.
Notice that daily travel distances are short relative to typical EV battery ranges: 85\% of daily trips are less than $50$ miles long, and on 54\% days, the total daily mileage across customers is less than ten thousand.

\begin{figure*}
\centering
\includegraphics[width=0.9\linewidth]{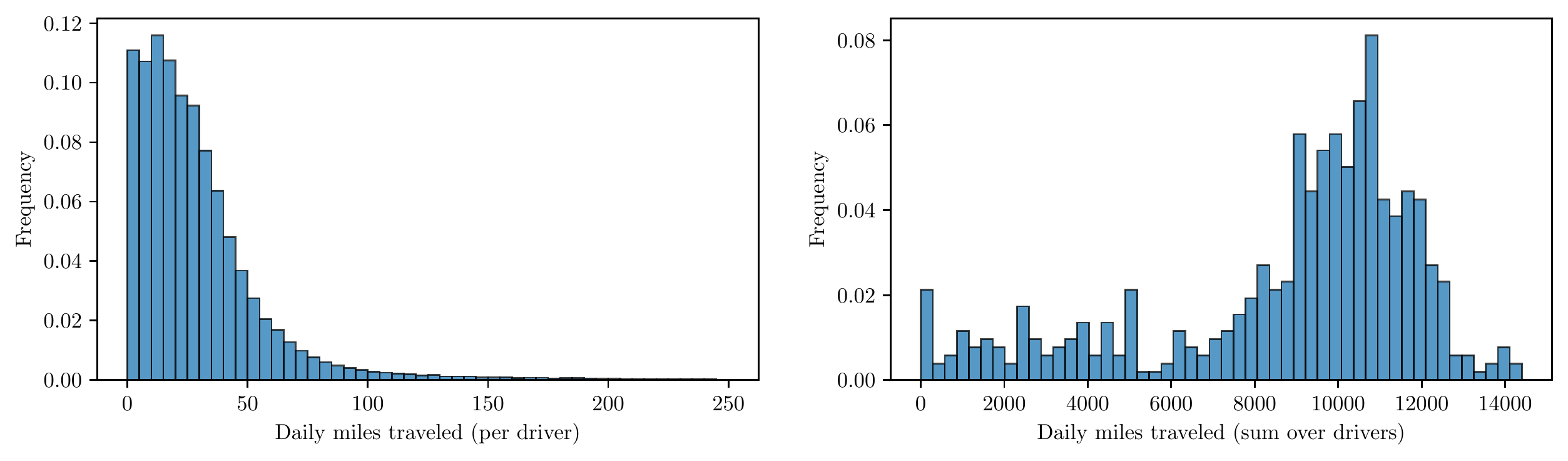}
\caption{Distributions over daily driving distance in the Puget Sound mobility dataset. \textbf{Left}: Distribution over daily mileage for each driver and day when travel occured. \textbf{Right}: Distribution over total daily mileage summed across drivers for each day of data collection.}
\label{fig:daily-commute-distance-histograms}
\end{figure*}

To simulate energy requirements from the daily mileage data, we assume that EVs have an energy efficiency of three miles per kWh \cite{inl2010gas}. 
Additionally, the distribution over each driver's daily energy consumption is modeled as a histogram with binwidth approximately two kWh. 
Driver energy requirements are then simulated by sampling from these distributions.

\subsection{Study results}

In these empirical studies, we quantify the impacts of resource sharing by comparing the amounts of battery capacity required to achieve a given target reliability level in settings with and without a shared resource.
Throughout the studies, we assume that all drivers have the same target reliability level, i.e. that $\probSucc_i = \probSucc$ for all $i \in [N]$. 
For a given battery \textit{configuration} (i.e., a choice of personal battery capacities $\bm{\batSizPriv}$ and shared battery capacity $\batSizPub$) we estimate the reliability level associated with the configuration through its empirical reliability in each setting. 
That is, in the non-shared setting, the battery capacity associated with a given reliability level $\alpha$ is approximated by the sum of the drivers' empirical $\alpha$-quantiles.
In the shared setting, we use the inner approximation Problem~\eqref{eq:inner} and Algorithm~\ref{alg:conservatism-reduction} to find candidate battery configuration and evaluate the empirical reliability associated with the configuration under various allocation rules.
Specifically, we consider a given configuration's reliability under the proportional, first-come, first-serve (FCFS), and utilitarian allocation rules described below.
\paragraph{Proportional allocation} Under the proportional
allocation rule, every driver receives a fraction of the shared capacity
that is proportional to their \textit{shortfall} (the difference between their personal battery capacity and realized energy requirement):
\[
    f_{i}^{\text{prop}}(\bm{\batSizPriv}, \bm{\tripDist}, \batSizPub)
    := \batSizPub \cdot
    \frac{(\tripDist_i - \batSizPriv_i)_+}{\sum_{j=1}^N
        (\tripDist_j - \batSizPriv_j)_+
    }.
\]
\paragraph{First-come first-served (FCFS)} The FCFS rule
assumes that drivers request a portion of the shared resource according to a certain order,
given as a permutation $\pi: [N] \to [N]$:
\[
    f_{i}^{\text{FCFS}}(\bm{\batSizPriv}, \bm{\tripDist},
    \batSizPub) :=
    \begin{cases}
        (\tripDist_i - \batSizPriv_i)_+, & \text{if
        $
        \sum_{j=1}^{\pi(i)} (\tripDist_j - \batSizPriv_j)_+
        \leq \batSizPub
        $}
        , \\
        0, & \text{otherwise}
    \end{cases}.
\]
For a fixed sample of realized driver energy requirements, we simulate the FCFS allocation rule by drawing a random permutation of $[N]$ and allocating available shared capacity to drivers in that order.
\paragraph{Utilitarian} Under ``utilitarian" rules, the objective of the decision maker distributing the shared resource is to maximize the number of drivers whose energy requirements are met through the shared resource.  Under this allocation rule, the drivers are sorted in increasing order of shortfall, and resources are disbursed according to this ordering. That is, the utilitarian allocation rule disburses resources as follows:
\begin{align*}
    f_{i}^{\text{util}}(\bm{\batSizPriv}, \bm{\tripDist},
    \batSizPub) &=
    \begin{cases}
        (\tripDist_i - \batSizPriv_i)_+, & \text{if
        $
        \sum_{j=1}^{\pi^{\star}(i)} (\tripDist_j - \batSizPriv_j)_+
        \leq \batSizPub
        $}
        , \\
        0, & \text{otherwise}
    \end{cases}, \\
    \text{where} \;\; \pi^{\star}(i) &\leq \pi^{\star}(j) \Leftrightarrow
    (\tripDist_i - \batSizPriv_i)_+ \leq
    (\tripDist_j - \batSizPriv_j)_+.
\end{align*}

For a particular allocation rule and battery configuration, the associated empirical reliability is calculated as the minimum over customer-level empirical reliabilities (i.e., the largest value $\hat{\probSucc}$ such that all customers have empirical energy requirement satisfaction probability at least $\hat{\probSucc}$). 
This calculation is described in detail in Appendix~\ref{sec:app-rel}

\subsubsection{Effect of the target reliability level}
First, we demonstrate how battery capacity size requirements scale with target reliability level $\probSucc$ in the shared and non-shared settings.
Figure~\ref{fig:reliability-over-capacity} illustrates the empirical reliability level as a function of the average (per-driver) deployed battery capacity for systems with 25, 50, and 100 drivers and target reliabilities $\alpha \in \{0.5, 0.505, \dots, 0.995\}$.
In each sub-figure, the purple line labeled ``Non-shared'' depicts the capacity-reliability frontier in the setting without resource sharing and three scatter plots of differing colors show the reliability levels associated with various candidate battery configurations for each of the three allocation rule considered.
Additionally, the `efficient frontier' of each allocation rule (i.e., the largest reliability level associated with a particular capacity level) is depicted by solid lines.

For target reliability levels higher than $0.70$, the amount of battery capacity required to achieve a particular reliability is lower in the shared setting than in the non-shared settings for each system size and allocation rule considered.
Furthermore, the benefits of sharing (as measured by a reduction in battery capacity requirements) increase as the target reliability level increases. 
For example, as depicted in Figure~\ref{fig:reliability-over-capacity}, for $N=25$ the difference between the capacity required to achieve a reliability level of 0.75 with and without sharing ranges between about 5-10 kWh per driver (depending on the allocation rule used). 
By contrast, for a target reliability level of 0.85 the difference is larger, between about 10-20 kWh per driver. 

It is worth noting that, as evidenced by the scatter plots in Figure \ref{fig:reliability-over-capacity}, the empirical level of reliability associated with each allocation rule can vary significantly. 
While the FCFS and utilitarian allocation rules are associated with similar reliability levels given a candidate configuration, the proportional allocation rule tends to be more conservative in the sense that the reliability level associated with a given configuration is lower than that of the other two rules.
In fact, due to its conservatism, the proportional allocation rule is associated with larger capacity requirements than the non-shared setting for low reliability levels (e.g., below 0.70 in the $N=25$ setting or below 0.67 in the $N=100$ setting).  

\begin{figure*}[h!]
\centering
\includegraphics[width=\linewidth]{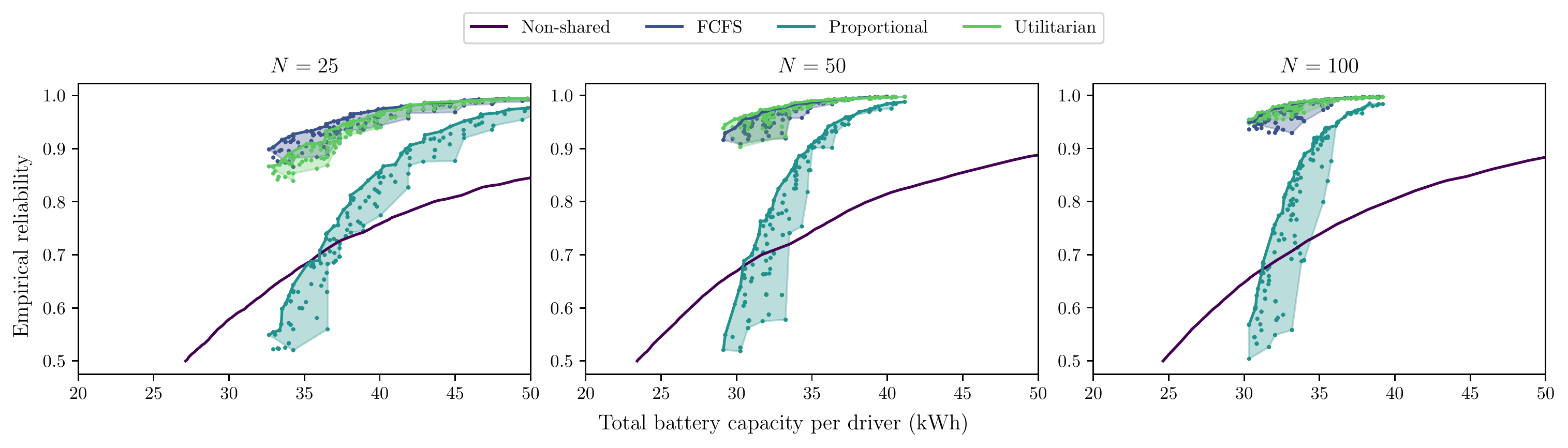}
\caption{Empirical reliability over deployed
battery capacity (averaged across drivers) for various allocation rules and numbers of drivers $N$. In each sub-figure, the capacity-reliability frontiers associated with the non-shared setting and each allocation rule considered in the shared setting are depicted by solid lines of varying colors. Additionally, candidate solutions considered in the shared settings and their associated reliabilities are plotted as colored scatter plots for each allocation rules considered.}
\label{fig:reliability-over-capacity}
\end{figure*}

\subsubsection{Effect of the number of drivers}
The benefits of resource sharing also increase with the number of drivers in the system.
Notice from Figure~\ref{fig:reliability-over-capacity} that as the number of customers increases from $N=25$ to $N=100$, the reliability associated with a given allocation rule and level of capacity increases.
For example, the reliability associated with 35 kWh of battery capacity per driver under a proportional allocation rule is approximately 0.67 for $N=25$, 0.89 for $N=50$, and 0.91 for $N = 100$.

To illustrate the benefits of resource sharing as a function of driver number in greater detail, Figure~\ref{fig:pct-reduction-by-clients} shows the relative reduction in deployed battery capacity as a function of $\numCust$ for three different reliability levels $\alpha \in \set{0.75, 0.85, 0.95}$.
More precisely, for each of the three target reliabilities considered, we vary $N \in \set{5, 25, \dots, 185}$ and plot the relative reduction in battery capacity,
\[
    1 - \frac{\sum_{i = 1}^N \batSizPriv^{\mathsf{shared}}_i + \batSizPub^{\mathsf{shared}}}{\sum_{i = 1}^N \batSizPriv^{\mathsf{nonshared}}_i},
\]
where $\bm{\batSizPriv}^{\mathsf{shared}}, \batSizPub^{\mathsf{shared}}$ are the shared battery configurations found by Algorithm~\ref{alg:conservatism-reduction}
and $\bm{\batSizPriv}^{\mathsf{nonshared}}$ is the non-shared configuration determined by the empirical estimate of~\eqref{eq:ns-quants}. 
For each $\numCust$ and target $\probSucc$, we the conduct 20 independent trials and plot the median, interquartile, and interdecile ranges of the relative reduction in total battery capacity.
We find that the percentage reduction in capacity grows with the number of drivers $N$.
For example, for $\alpha = 0.85$, a system size of $N=5$ is associated with a median capacity reduction of 7\% while a system size of $N=185$ is associated with a median reduction of over 20\%.
Moreover, for sufficiently large $N$ and target reliability levels, the benefits of sharing can be large: for a target reliability of $\alpha = 0.95$, the availability of a shared resource enables at least a $50\%$ reduction in deployed battery capacity relative to the non-shared setting.

\begin{figure}[h]
\centering
\includegraphics[width=0.95\linewidth]{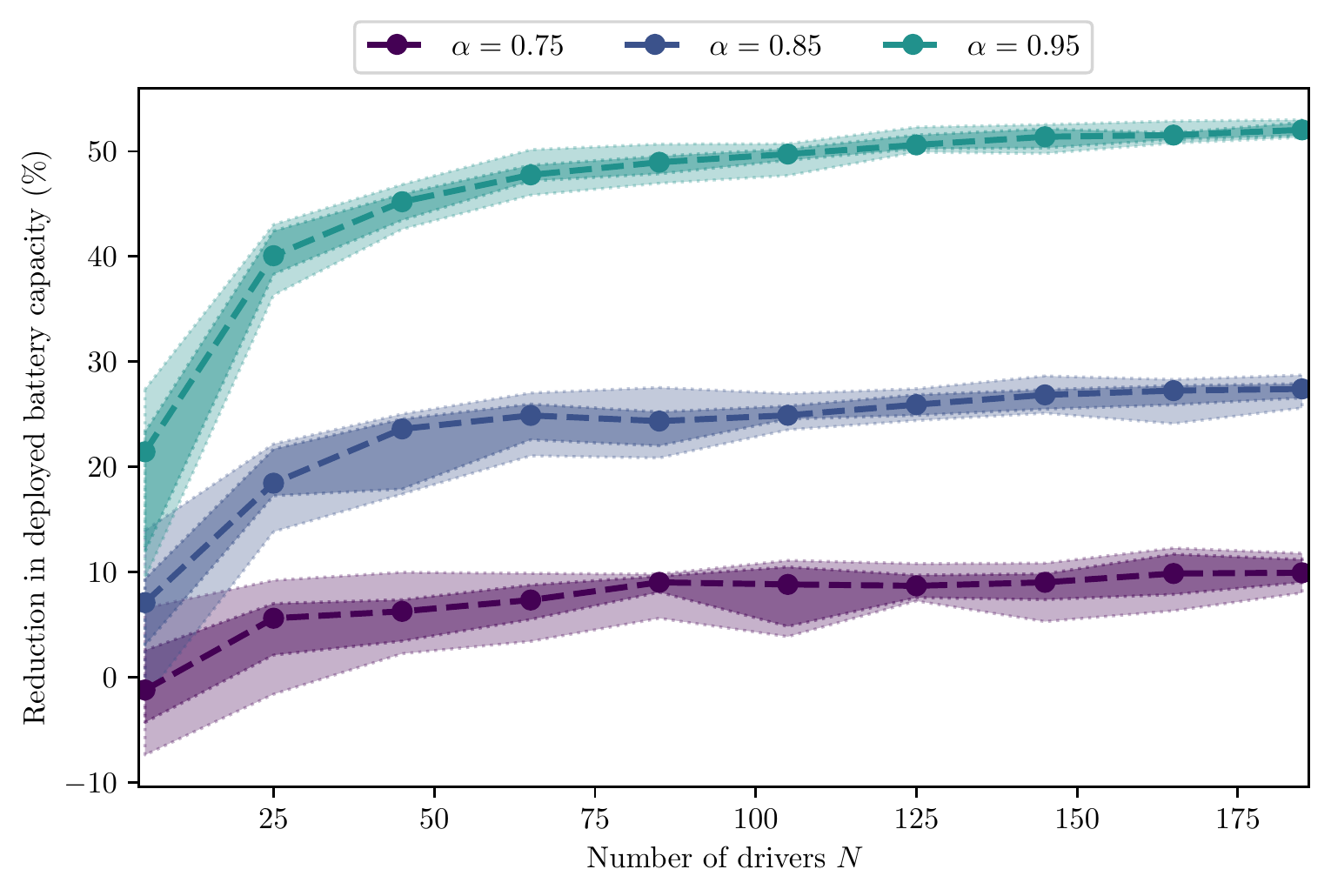}
\caption{Percentage reductions in total deployed battery capacity associated with resrouce sharing as a function
of the number of drivers in the system $N$ for three different reliability levels $\alpha$. For each system size $N$ and target reliability level $\probSucc$ considered, the median percent reduction (dashed line), its interquartile range (darkly shaded region), and its interdecile range (lightly shaded region) are shown.}
\label{fig:pct-reduction-by-clients}
\end{figure}

\section{Conclusion} \label{sec:conclusion}

We consider a setting in which electric vehicle drivers are given access to a shared battery resource which can be used to complement their personal vehicle batteries.
From the perspective of a central decision maker, we formulate the problem of choosing the personal and shared battery capacities in order to minimize total deployed capacity while ensuring that driver mobility needs are met with high probability.
The resultant capacity planning problem is a chance constrained optimization problem and can be challenging to solve directly or even approximate using sampling-based methods. 
We derive a tractable inner approximation to the original capacity planning problem which is amenable to approximation through scenario methods.
To assess the potential of resource sharing to reduce total deployed battery capacity (relative to a setting without sharing) we conduct an empirical study using longitudinal mobility data from drivers in Puget Sound, Washington. 
The empirical results demonstrate that resource sharing has the potential to greatly reduce the amount of battery capacity deployed, and that benefits from sharing increase with number of drivers in the system and the target reliability level desired.
In particular, when driver target reliabilities are high (e.g., greater than 0.95), resource sharing can reduce the amount of deployed battery capacity by at least 40\%, even for moderately sized systems of 25 drivers or more. 

Our results suggest resource sharing has significant potential to reduce deployed battery capacity, and merits further exploration.
There are several interesting directions for future work.
First, the stylized model considered in this paper can be refined to reflect various practical considerations. 
For example, although we assume that the shared resource can be divided into units of arbitrary size, commercially available range extension resources are typically of fixed size, giving rise to a problem with integer constraints. 
Additionally, while we assume the perspective of a central decision maker who has control over the size of both the personal and shared batter capacity, it may be more reasonable to model each driver and the central planner as individual agents.
Such an assumption would give rise to a game-theoretic model, in which one might attempt to characterize an allocation rule that induces socially optimal behavior from a mechanism design or cooperative game theory perspective.

\bibliographystyle{IEEEtran}
\bibliography{References}

\appendix
\subsection{Nonconvexity of proportional allocation rule} \label{sec:app-nonconvex}
We argue that the proportional allocation rule leads to bilinear
constraints in the scenario approximation of~\eqref{eq:opt-shared}.
Consider the case where $N = 2$ and relabel $v_i := \tripDist_i - \batSizPriv_i$.
The constraints read
\[
\begin{aligned}
    -v_1 + \frac{\batSizPub (v_1)_+}{(v_1)_+ + (v_2)_+} &\geq 0, \\
    -v_2 + \frac{\batSizPub (v_2)_+}{(v_1)_+ + (v_2)_+} &\geq 0.
\end{aligned}
\]
On $\set{(v_1, v_2, \batSizPub) \mid (v_1)_+ + (v_2)_+ > 0}$, these are
equivalent to
\[
    -v_i ((v_1)_+ + (v_2)_+) + \batSizPub (v_i)_+ \geq 0.
\]
In the above, $\batSizPub$ is coupled with
$\max(\tripDist_i - \batSizPriv_i, 0)$, which does not lead to a
convex constraint set in general.

\subsection{Example: Intuition behind shared allocation model} \label{sec:app-gaus_ex}
To illustrate why we expect~\eqref{eq:inner} to yield better solutions than the non-shared setting, we
compare the objective value attained by~\eqref{eq:opt-non-shared} in the non-shared
setting with an ``obvious'' choice of $\set{\batSizPriv_i}$ and $\batSizPub$ under the following assumption:
\begin{center}
    \textbf{Assumption.} The trip energies satisfy
    $\bm{\tripDist} \sim \cN(
        \mu\bm{1},
        \sigma^2 I_N
    )$.
\end{center}
Fix $\alpha_i \equiv \alpha$ such that \( F_i^{-1} ( \alpha_i) = \mu(1 + \sigma) \).
Then the optimal value of Problem~\eqref{eq:opt-non-shared} is
\[
    \mathsf{opt}^{*}_{\mathrm{ns}} = \sum_{i = 1}^N F_i^{-1} (\alpha_i) = N\mu(1 + \sigma).
\]
Now, we demonstrate a feasible solution for Problem~\eqref{eq:inner} attaining
a strictly better objective value. In particular, fix 
\[
    \batSizPriv_i = \expec{\tripDist_i} = \mu, \quad \text{for all $i = 1, \dots, N$}.
\]
Under this choice, the sum of shortfalls is distributed as
\[
    \sum_{i = 1}^N \max\set{0, \tripDist_i - \batSizPriv_i} \overset{(d)}{=}
    \sum_{i = 1}^N \max\set{0, \zeta_i}, \quad \zeta_i \sim \cN(0, \sigma^2)
\]
Letting $Z_i := \max\set{0, \zeta_i}$, each term has expected value
\[
    \expec{Z_i} = \expec{g_i \bm{1}\set{g_i \geq 0}} =
    \frac{1}{2} \expec{\abs{g_i}} = \frac{\sigma}{2} \sqrt{\frac{2}{\pi}} =
    \frac{\sigma}{\sqrt{2 \pi}}.
\]
Therefore, we can rewrite the chance constraint as
\begin{equation}
    \prob{S_{N} \leq \batSizPub} \geq \alpha, \quad S_{N} := \sum_{i = 1}^N Z_i.
\end{equation}
Every $Z_i$ is a $\sigma$-subgaussian random variable, so by standard concentration inequalities (see, e.g.~\cite[Chapter 2]{Ver18}):
\begin{align}
    \prob{S_{N} - \frac{N \sigma}{\sqrt{2\pi}} \geq t}
    \leq \exp\left(-\frac{c t^2}{\sigma^2 N}\right)
\end{align}
Choosing $t \asymp \sigma \sqrt{N} \log\left(\frac{1}{1 - \alpha}\right)$ satisfies the
desired inequality. Therefore, the optimal value satisfies
\[
    \mathsf{opt}_{\mathrm{s}}^* \leq \batSizPub + \sum_{i = 1}^N \batSizPriv_i
    = \frac{N \sigma}{\sqrt{2\pi}} + \sigma\sqrt{N} \log\left(\frac{1}{1 - \alpha}\right)
    + N \mu,
\]
and the difference between objective values is
\begin{align*}
    \mathsf{opt}_{\mathrm{ns}}^* - \mathsf{opt}_{\mathrm{s}}^* &\geq
    N\sigma \left(1 - \sqrt{\frac{1}{2\pi}}
    - \frac{\sigma \log(\frac{1}{1-\alpha})}{\sqrt{N}} \right) \\
    &=
    \Omega(N \sigma).
\end{align*}
In particular, more clients $N$ in the system imply bigger
deployed capacity savings by the shared design.

\subsection{Estimating empirical reliability by sampling} \label{sec:app-rel}
Suppose that a configuration $\left(\bm{\batSizPriv}, \batSizPub, \set{g_{i}}_{i=1}^N\right)$ is
provided, where $g_i$ is the constraint with respect to which reliability should be estimated. To estimate the reliability level achieved by that configuration, we use the following procedure:
\begin{enumerate}
    \item Generate $m$ independent realizations $\set{\bm{\tripDist}^{(j)}}_{j=1}^m$.
    \item Compute the client-level average satisfaction
    \[
        S_i^m = \frac{1}{m}
        \sum_{j=1}^m \bm{1}\set{
            g_{i}(
                \bm{\batSizPriv}, \bm{\tripDist}, \batSizPub
            ) \geq 0
        }.
    \]
    \item Output the minimum satisfaction over clients as the reliability level:
    \(
        \widehat{\alpha} := \min_{i = 1}^N S_{i}^m.
    \)
\end{enumerate}
To calculate the necessary number of samples $m$, we note
\begin{equation*}
\begin{aligned}
    X_i^{(j)} := \bm{1}\set{
        g_{i}(\bm{\batSizPriv},
              \bm{\tripDist},
              \batSizPub)}, \;\;
    \expec{X_i^{(j)}} = \Pr(
        g_i(\bm{\batSizPriv}, \bm{\tripDist}, \batSizPub)
        \geq 0
    ).
\end{aligned}
\end{equation*}
Consequently, $X_i^{(1)}, \dots, X_i^{(m)}$ are independent and identically distributed
Bernoulli random variables for fixed $i \in [N]$. A Chernoff bound followed by a union
bound yields
\begin{equation}
    \prob{\max_{i \in [N]}\abs{
        S_{i}^m - \expec{S_i^m}
    } \geq \varepsilon}
    \leq 2N\exp\left(
        -\frac{m \varepsilon^2}{4}
    \right),
    \label{eq:chernoff-samples}
\end{equation}
For a target error $\varepsilon$ and confidence level $\delta$ we require
\[
    m \geq \frac{4 \log(2N / \delta)}{\varepsilon^2} \quad
    \text{samples}.
\]

\end{document}